\newcolumntype{L}[1]{>{\raggedright\let\newline\\\arraybackslash\hspace{0pt}}m{#1}}
\newcolumntype{C}[1]{>{\centering\let\newline\\\arraybackslash\hspace{0pt}}m{#1}}
\newcolumntype{R}[1]{>{\raggedleft\let\newline\\\arraybackslash\hspace{0pt}}m{#1}}
\newtheorem{Theorem}{Theorem}
\newtheorem{Corollary}{Corollary}
\newtheorem{Proposition}{Proposition}
\newtheorem{Lemma}{Lemma}
\newtheorem{Theorem}{Theorem}
\newtheorem{Lemma}[Theorem]{Lemma}
\newtheorem{Remark}{Remark}
\newtheorem{Assumption}{Assumption}
\theoremstyle{remark}
\newcommand{\Real}{\mathbb{R}}
\newcommand{\Nat}{\mathbb{N}}
\newcommand{\calA}{\mathcal{A}}
\newcommand{\calB}{\mathcal{B}}
\newcommand{\calC}{\mathcal{C}}
\newcommand{\calD}{\mathcal{D}}
\newcommand{\calE}{\mathcal{E}}
\newcommand{\calF}{\mathcal{F}}
\newcommand{\calG}{\mathcal{G}}
\newcommand{\calM}{\mathcal{M}}
\newcommand{\calN}{\mathcal{N}}
\newcommand{\calR}{\mathcal{R}}
\newcommand{\calS}{\mathcal{S}}
\newcommand{\lk}{\ell k}
\newcommand{\bmd}{\bm{d}}
\newcommand{\bms}{\bm{s}}
\newcommand{\bmu}{\bm{u}}
\newcommand{\bmv}{\bm{v}}
\newcommand{\bmw}{\bm{w}}
\newcommand{\bgamma}{\bm{\gamma}}
\newcommand{\bepsilon}{\bm{\epsilon}}
\newcommand{\bpsi}{\bm{\psi}}
\DeclareMathOperator{\Tr}{Tr}
\DeclareMathOperator{\Vc}{vec}
\newcommand{\E}{\mathbb{E}}
\newcommand{\T}{{\sf T}}
\newcommand{\norm}[1]{{\left\lVert{#1}\right\rVert}}
\newcommand{\err}[1]{\widetilde{#1}}
\newcommand{\diag}[1]{\operatorname{diag}\left\{{#1}\right\}}
\newcommand{\col}[1]{\operatorname{col}\left\{{#1}\right\}}
\newcommand{\msout}[1]{\text{\color{green} \sout{\ensuremath{#1}}}}
\newcommand{\del}[1]{{\color{green}\ifmmode \msout{#1}\else\sout{#1}\fi}}
\definecolor{mypurple}{rgb}{0.49, 0.18, 0.56}
\definecolor{myorange}{rgb}{0.99, 0.35, 0.11}
\definecolor{mygreen}{rgb}{0.18, 0.55, 0.34}
\newcommand{\wydel}[1]{{\color{mygreen}\ifmmode \msout{#1}\else\sout{#1}\fi}}
\begin{document}
\title{An Event-based Diffusion LMS Strategy}

\author{Yuan Wang, Wee~Peng~Tay, and Wuhua Hu
	\thanks{Y. Wang and W. P. Tay are with the School of Electrical and Electronic Engineering, Nanyang Technological University, Singapore. Emails: ywang037@e.ntu.edu.sg, wptay@ntu.edu.sg.} 
	\thanks{W. Hu was with the School of Electrical and Electronic Engineering, Nanyang Technological University, Singapore, and now is with the Department of Artificial Intelligence for Applications, SF Technology Co. Ltd, Shenzhen, China. E-mail: wuhuahu@sf-express.com.}
}

\maketitle

\begin{abstract}
We consider a wireless sensor network consists of cooperative nodes, each of them keep adapting to streaming data to perform a least-mean-squares estimation, and also maintain information exchange among neighboring nodes in order to improve performance. For the sake of reducing communication overhead, prolonging batter life while preserving the benefits of diffusion cooperation, we propose an energy-efficient diffusion strategy that adopts an event-based communication mechanism, which allow nodes to cooperate with neighbors only when necessary. We also study the performance of the proposed algorithm, and show that its network mean error and MSD are bounded in steady state. Numerical results demonstrate that the proposed method can effectively reduce the network energy consumption without sacrificing steady-state network MSD performance significantly.
\end{abstract}

\section{Introduction}
In the era of big data and Internet-of-Things (IoT), ubiquitous smart devices continuously sense the environment and generate large amount of data rapidly. To better address the real-time challenges arising from online inference, optimization and learning, distributed adaptation algorithms have become especially promising and popular compared with traditional centralized solutions. As computation and data storage resources are distributed to every sensor node in the network, information can be processed and fused through local cooperation among neighboring nodes, and thus reducing system latency and improving robustness and scalability. Among various implementations of distributed adaptation solutions \cite{Bertsekas:J97,RabbatNowak:J05,Bogdanovic:J14,XiaoBoydLall:C06,Nedic:J10,SrivaNedic:J11}, diffusion strategies are particularly advantageous for continuous adaptation using constant step-sizes, thanks to their low complexity, better mean-square deviation (MSD) performance and stability \cite{CatSayed:J10,ZhaoSayed:J12,TuSayed:J12,SayedTuChen:M13,Sayed:BC14,Sayed:IP14}. Therefore diffusion strategies have attracted a lot of research interest in recent years for both single-task scenarios where nodes share a common parameter of interest \cite{HuTay:J15,Chencheng:J15,AbdoleeChampagne:J16,RadLabeau:J16,PiggottSolo:J16,NtemosPlataChaves:J17,Chengcheng:J18}, and multi-task networks where parameters of interest differ among nodes or groups of nodes \cite{PlataChaves:J15,NassifRichard:J16,ChenRichard:J17,WangTayHu:J17,FernandezGarcia:J17}. 

In diffusion strategies, each sensor communicates local information to their neighboring sensors in each iteration. However, in IoT networks, devices or nodes usually have limited energy budget and communication bandwidth, which prevent them from frequently exchanging information with neighboring sensors. Several methods to improve energy efficiency in diffusion have been proposed in the literature, and these can be divided into two main categories: reducing the number of neighbors to cooperate with \cite{ZhaoSayed:C12,Arablouei:J15,HuangYangShen:J17}; and reducing the dimension of the local information to be transmitted \cite{ArabloueiLMS:J14,SayinKozat:J14,HarraneRicard:C16}. These methods either rely on additional optimization procedures, or use auxiliary selection or projection matrices, which require more computation resources to implement.

Unlike time-driven communication where nodes exchange information at every iteration, event-based communication mechanisms allow nodes only trigger communication with neighbors upon occurrence of certain meaningful events. This can significantly reduce energy consumption by avoiding unnecessary information exchange especially when the system has reached steady-state. It also allows every node in the network to share the limited bandwidth resource so that channel efficiency is improved. Such mechanisms have been developed for state estimation, filtering, and distributed control over wireless sensor networks\cite{WuJiaJohansson:J13,HanMoWu:J15,LiuWangHe:J15,Mohammadi:J17,Seyboth:J13,GarciaCao:J2014,HuLiuFeng:J16,XingWenGuo:J17}, but have not been fully investigated in the context of diffusion adaptation. In \cite{UtluKilicKozat:J17}, the author proposes a diffusion strategy where every entry of the local intermediate estimates are quantized into values of multiple levels before being transmitted to neighbors, communication is triggered once quantized local information goes through a quantization level crossing. The performance of this method relies largely on the precision of selected quantization scheme. However, choosing a suitable quantization scheme with desired precision, and requiring every node being aware of same quantization scheme is practically difficult for online adaptation where parameter of interest and environment may change over time.

In this paper, we propose an event-based diffusion strategy to reduce communication among neighboring nodes while preserve the advantages of diffusion strategies. Specifically, each node monitors the difference between the full vector of its current local update and the most recent intermediate estimate transmitted to its neighbors. A communication is triggered only if this difference is sufficiently large. We provide a sufficient condition for the mean error stability of our proposed strategy, and an upper bound of its steady-state network mean-squared deviation (MSD). Simulations demonstrate that our event-based strategy achieves a similar steady-state network MSD as the popular adapt-then-combine (ATC) diffusion strategy but a significantly lower communication rate.

The rest of this paper is organized as follows. In Section~\ref{sec:Preliminaries}, we introduce the network model, problem formulation and discuss prior works. In Section~\ref{sec:EB-ATC}, we describe our proposed event-based diffusion LMS  strategy and analyze its performance. Simulation results are demonstrated in Section~\ref{sec:simulation} followed by concluding remarks in Sections~\ref{sec:conclusion}.

\emph{Notations.} 
Throughout this paper, we use boldface characters for random variables, plain characters for realizations of the corresponding random variables as well as deterministic quantities. In addition, we use upper-case characters for matrices and lower-case ones for vectors and scalars. The notation $I_N$ is an $N\times N$ identity matrix. The matrix $A^T$ is the transpose of the matrix $A$, $\lambda_n(A)$, and $\lambda_{\min}(A)$ is the $n$-th eigenvalue and the smallest eigenvalue of the matrix $A$, respectively. Besides, $\rho(A)$ is the spectral radius of $A$. The operation $A\otimes B$ denotes the Kronecker product of the two matrices $A$ and $B$. The notation $\norm{\cdot}$ is the Euclidean norm, $\norm{\cdot}_{b,\infty}$ denotes the block maximum norm\cite{Sayed:BC14}, while $\norm{A}^2_\Sigma \triangleq A^*\Sigma A$. We use $\diag{\cdot}$ to denote a matrix whose main diagonal is given by its arguments, and $\col{\cdot}$ to denote a column vector formed by its arguments. The notation $\Vc(\cdot)$ represents a column vector consisting of the columns of its matrix argument stacked on top of each other. If $\sigma = \Vc(\Sigma)$, we let $\norm{\cdot}_\sigma = \norm{\cdot}_\Sigma$, and use either notations interchangeably. 

\section{Data Models and Preliminaries}\label{sec:Preliminaries}
In this section, we first present our network and data model assumptions. We then give a brief description of the ATC diffusion strategy.

\subsection{Network and Data Model}
Consider a network represented by an undirected graph $G=(V,E)$, where $V=\{1,2,\cdots,N\}$ denotes the set of nodes, and $E$ is the set of edges. Any two nodes are said to be connected if there is an edge between them. The neighborhood of each node $k$ is denoted by $\calN_k$ which consists of node $k$ and all the nodes connected with node $k$. Since the network is assumed to be undirected, if node $k$ is a neighbor of node $\ell$, then node $\ell$ is also a neighbor of node $k$. Without loss of generality, we assume that the network is connected.

Every node in the network aims to estimate an unknown parameter vector $w^\circ\in\mathbb{R}^{M\times 1}$. At each time instant $i\ge0$, each node $k$ observes data $\bmd_k(i)\in\Real$ and $\bmu_k(i)\in\Real^{M\times 1}$, which are related through the following linear regression model:
\begin{align}
\bmd_k(i) = \bmu^\T_k(i) w^\circ + \bmv_k(i), \label{eq:datamodel}
\end{align}
where $\bmv_k(i)$ is an additive observation noise. We make the following assumptions. 

\begin{Assumption}\label{asmp:regressor}
The regression process $\{\bmu_{k,i}\}$ is zero-mean, spatially independent and temporally white. The regressor $\bmu_k(i)$ has positive definite covariance matrix $R_{u,k}=\E\left[\bmu_k(i)\bmu^\T_k(i)\right]$.  
\end{Assumption}

\begin{Assumption}\label{asmp:noise}
The noise process $\{\bmv_k(i)\}$ is spatially independent and temporally white. The noise $\bmv_k(i)$ has variance $\sigma^2_{v,k}$, and is assumed to be independent of the regressors $\bmu_\ell(j)$ for all $\{k, \ell, i, j\}$.
\end{Assumption}

\subsection{ATC Diffusion Strategy}
To estimate the parameter $w^\circ$, the network solves the following least mean-squares (LMS) problem:
\begin{align}\label{network_obj}
\min_{w}\; \sum_{k=1}^{N} J_k(w),
\end{align}
where for each $k\in V$,
\begin{align}
J_{k}(w) =  \sum_{k\in \calN_k} \E \left| \bmd_k(i) - \bmu_k(i)^\T w \right|^2.
\end{align}
The ATC diffusion strategy \cite{CatSayed:J10,Sayed:BC14} is a distributed optimization procedure that attempts to solve \eqref{network_obj} iteratively by performing the following local updates at each node $k$ at each time instant $i$:
\begin{align}
\bpsi_k(i) &= \bmw_k(i-1) + \mu_k \bmu_k(i) \left(  \bmd_k(i) - \bmu_k(i)^\T \bmw_k(i-1)\right) ,\label{eq:ATC-A}\\
\bmw_{k,i} &= \sum \limits_{\ell \in \calN} a_{\ell k} \bpsi_{\ell,i}, \label{eq:ATC-C}
\end{align}
where $\mu_k > 0$ is a chosen step size. The procedure in \eqref{eq:ATC-A} is referred to as the \textit{adaptation} step and \eqref{eq:ATC-C} is the \textit{combination} step. The combination weights $\{ a_{\ell k} \}$ are non-negative scalars and satisfy:
\begin{align}
{a_{\ell k}} \ge 0,\;\; {\sum \limits_{\ell =1}^{N} a_{\ell k} = 1}, \;\; {a_{\ell k} = 0}, \;  \text{if} \; {\ell \notin \calN_k}{.} \label{eq:weights_A}
\end{align}
The local estimates $\bmw_{k,i}$ in the ATC strategy are shown to converge in mean to the true parameter $w^\circ$ if the step sizes $\mu_k$ are chosen to be below a particular threshold \cite{CatSayed:J10,Sayed:BC14}.

\section{Event-Based Diffusion}\label{sec:EB-ATC}

We consider a modification of the ATC strategy so that the local intermediate estimate $\bpsi_k(i)$ of each node $k$ is communicated to its neighbors only at certain trigger time instants $s_k^n$, $n=1, 2, \ldots$. Let $\overline{\bpsi}_k(i)$ be the last local intermediate estimate node $k$ transmitted to its neighbors at time instant $i$, i.e., 
\begin{align}
\overline{\bpsi}_k(j)=\bpsi_k(s^n_k), \text{ for } j\in\left[s^n_k, s^{n+1}_k \right) .
\end{align}
Let $\bepsilon^-_k(i)$ be the \emph{a prior} gap defined as
\begin{align}
\bepsilon^-_k(i) = \bpsi_k(i)-\overline{\bpsi}_k(i-1). \label{eq:gap_aprior}
\end{align}
Let $f\left(\bepsilon^-_k(i)\right) = \norm{\bepsilon^-_k(i)}^2_{Y_k}$, where $Y_k$ is a positive semi-definite weighting matrix. 

For each node $k$, transmission of its local intermediate estimate $\bpsi_k(i)$ is triggered whenever 
\begin{align}
f\left(\bepsilon^-_k(i)\right)>\delta_k(i)>0,
\end{align}
where $\delta_k(i)$ is the threshold adopted by node $k$ at time $i$. 

In this paper, we allow the thresholds to be time-varying. We further assume $\{\delta_k(i)\}$ of each node $k$ are upper bounded, and let
\begin{align}
\delta_k = \sup\{\delta_k(i)|i>0\} .
\end{align}
In addition, we define binary variables $\{\bgamma_k(i)\}$ such that $\bgamma_k(i)=1$ if node $k$ transmits at time instant $i$, and 0 otherwise. The sequence of triggering time instants $0\le s^1_k\le s^2_k\le \ldots$ can then be defined recursively as
\begin{align}
s^{n+1}_k = \min \{i\in\Nat|i>s^n_k,\bgamma_k(i)=1\}.
\end{align}

For every node in the network, we apply the event-based adapt-then-combine (EB-ATC) strategy detailed in Algorithm~\ref{al:trigrule}. Note that every node always combines its own intermediate estimate regardless of the triggering status. A succinct form of the EB-ATC can be summarized as the following equations,
\begin{align}
\bpsi_k(i) &= \bmw_k(i-1) + \mu_k \bmu_k(i) \left(  \bmd_k(i) - \bmu_k(i)^\T \bmw_k(i-1)\right), \label{eq:local_update}  \\
\bmw_k(i) &= a_{kk}\bpsi_k(i)+ \sum\limits_{\ell\in\calN_k\backslash k} a_\lk \overline{\bpsi}_\ell(i). \label{eq:eb_atc_combination}
\end{align}

\begin{algorithm}[!tb]
	\caption{Event-based ATC Diffusion Strategy (EB-ATC)}
	\label{al:trigrule}
	\begin{algorithmic}[1] 
		\FOR {every node $k$ at each time instant $i$}
		\STATE \textit{\textbf{Local Update:}}
		\STATE Obtain intermediate estimate $\bpsi_k(i)$ using \eqref{eq:ATC-A}
		\vspace{0.25cm}
		\STATE \textit{\textbf{Event-based Triggering:}}
		\STATE Compute $\bepsilon_k^-(i)$ and $f\left(\bepsilon^-_k(i)\right)$.
		\IF {$f\left(\bepsilon^-_k(i)\right)>\delta_k(i)$}
		\STATE (i) Trigger the communication, broadcast local update $\bpsi_{k,i}$ to every neighbors $\ell\in\calN_k$.
		\STATE (ii) Mark $\bgamma_k(i)=1$, and update $\overline{\bpsi}_\ell(i) = \bpsi_\ell(i)$.
		\ELSIF {$f\left(\bepsilon^-_k(i)\right)\le\delta_k(i)$} 
		\STATE (i) Keep silent. 
		\STATE (ii) Mark $\bgamma_k(i)=0$, and update $\overline{\bpsi}_\ell(i) = \overline{\bpsi}_\ell(i-1)$.
		\ENDIF
		\vspace{0.25cm}
		\STATE \textit{\textbf{Diffusion Combination}}
		\STATE $\bmw_k(i) = a_{kk}\bpsi_k(i)+ \sum\limits_{\ell\in\calN_k\backslash k} a_\lk \overline{\bpsi}_\ell(i)$
		\ENDFOR
	\end{algorithmic}	
\end{algorithm}

\section{Performance Analysis}
In this section, we study the mean and mean-square error behavior of the EB-ATC diffusion strategy. 

\subsection{Network Error Recursion Model}
In order to facilitate the analysis of error behavior, we first define some necessary symbols and derive the recursive equations of errors across the network. To begin with, the error vectors of each node $k$ at time instant $i$ are given by
\begin{align}
\err{\bpsi}_k(i) &= w^\circ - \bpsi_k(i),  \label{eq:error_phi_k}\\
\err{\bmw}_k(i) &= w^\circ - \bmw_k(i).  \label{eq:error_w_k}
\end{align}
Recall that under EB-ATC each node only combines the local updates $\{\overline{\bpsi}_\ell(i)|\ell\in\calN_k\}$ that were previously received from its neighbors. Therefore, we also introduce the \emph{a posterior} gap $\bepsilon_k(i)$ defined as: 
\begin{align}
\bepsilon_k(i) = \bpsi_k(i) - \overline{\bpsi}_k(i), \label{eq:gap_aposterior}
\end{align}
to capture the discrepancy between the local intermediate estimate $\bpsi_k(i)$ and the estimate $\overline{\bpsi}_k(i)$ that is available at neighboring nodes. We have
\begin{align}
\bepsilon_k(i)=
\begin{cases}
0, &\text{ if } \norm{\bepsilon^-_k(i)}_{Y_k}^2>\delta_k(i),\\
\bepsilon^-_k(i), &\text{ otherwise}.
\end{cases} \label{eq:aposteriorgap_cases}
\end{align}
From \eqref{eq:aposteriorgap_cases}, we have the following result.

\begin{Lemma}\label{lemma:lemma1}
	The \emph{a posterior} gap $\bepsilon_k(i)$ is bounded, and $\norm{\bepsilon_k(i)}\le\left(\frac{\delta_k}{\lambda_{\min}(Y_k)}\right)^{\frac{1}{2}}$.
\end{Lemma}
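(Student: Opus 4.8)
The plan is to argue directly from the case definition of the \emph{a posterior} gap in \eqref{eq:aposteriorgap_cases}, treating the triggered and untriggered cases separately. In the triggered case, where $\norm{\bepsilon^-_k(i)}_{Y_k}^2 > \delta_k(i)$, the node transmits so that $\overline{\bpsi}_k(i) = \bpsi_k(i)$ and hence $\bepsilon_k(i) = 0$; the claimed bound then holds trivially, since its right-hand side is nonnegative. All the content therefore lives in the untriggered case.

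In the untriggered case we have $\bepsilon_k(i) = \bepsilon^-_k(i)$ together with a failed triggering condition, i.e.\ $\norm{\bepsilon^-_k(i)}_{Y_k}^2 \le \delta_k(i)$. The key step is to relate the weighted seminorm $\norm{\cdot}_{Y_k}$ back to the Euclidean norm. Writing $\norm{x}_{Y_k}^2 = x^\T Y_k x$ and using the Rayleigh-quotient lower bound for the symmetric matrix $Y_k$, I would invoke $x^\T Y_k x \ge \lambda_{\min}(Y_k)\norm{x}^2$ for all $x$. Applying this with $x = \bepsilon_k(i) = \bepsilon^-_k(i)$ and chaining it with the triggering bound and the uniform upper bound $\delta_k(i) \le \delta_k = \sup\{\delta_k(i)\,|\,i>0\}$ yields $\lambda_{\min}(Y_k)\norm{\bepsilon_k(i)}^2 \le \norm{\bepsilon^-_k(i)}_{Y_k}^2 \le \delta_k(i) \le \delta_k$. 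Dividing by $\lambda_{\min}(Y_k)$ and taking square roots gives the stated bound $\norm{\bepsilon_k(i)} \le (\delta_k/\lambda_{\min}(Y_k))^{1/2}$, which also covers the triggered case since there the left-hand side is zero.

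The only genuine subtlety---and really the only place where care is needed---is that the denominator $\lambda_{\min}(Y_k)$ must be strictly positive for the bound to be meaningful. Although $Y_k$ is introduced as positive semi-definite, the division step silently requires $Y_k$ to be positive definite (equivalently $\lambda_{\min}(Y_k) > 0$); I would either strengthen the hypothesis to positive definiteness or note explicitly that the lemma is read under $\lambda_{\min}(Y_k) > 0$. Beyond this there is no real obstacle: the result is essentially the observation that a uniformly bounded weighted norm forces a bounded Euclidean norm whenever the weighting matrix is uniformly positive definite, combined with the case split that makes the gap either exactly zero or subject to the threshold. Once the Rayleigh inequality and the supremum bound $\delta_k(i)\le\delta_k$ are in hand, the argument is only two or three lines.
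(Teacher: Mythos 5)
Your proposal is correct and follows essentially the same route as the paper's proof in Appendix~A: a Rayleigh-quotient lower bound $\norm{x}^2_{Y_k}\ge\lambda_{\min}(Y_k)\norm{x}^2$ (which the paper derives explicitly via the eigendecomposition of $Y_k$), combined with the case structure of \eqref{eq:aposteriorgap_cases} giving $\norm{\bepsilon_k(i)}^2_{Y_k}\le\delta_k(i)\le\delta_k$, then division and a square root. Your caveat that the bound silently requires $\lambda_{\min}(Y_k)>0$ is well taken --- the paper states only positive semi-definiteness of $Y_k$ and shares this implicit strengthening --- but it is a remark on the hypotheses rather than a gap in the argument.
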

\begin{proof}
See Appendix~\ref{sec:appdx_A}
\end{proof}
Collecting the iterates  $\err{\bpsi}_{k,i}$, $\err{\bmw}_{k,i}$, and $\bepsilon_k(i)$ across all nodes we have,
\begin{align}
\err{\bpsi}(i) &= \col{\left(\err{\bpsi}_k(i)\right) _{k=1}^N}, \label{eq:error_phi}\\
\err{\bmw}(i) &= \col{\left(\err{\bmw}_k(i)\right)_{k=1}^N}, \label{eq:error_w} \\
\bepsilon(i) &= \col{\left(\bepsilon_k(i)\right)_{k=1}^N}. \label{eq:gap_ap_network}
\end{align}
Subtracting both sides of \eqref{eq:local_update} from $\bmw^\circ$, and applying the data model \eqref{eq:datamodel}, we obtain the following error recursion for each node $k$:
\begin{align}
\err{\bpsi}_k(i) = \left(I_M-\mu_k\bmu_k(i)\bmu_k^\T(i) \right) \err{\bmw}_k(i) - \mu_k \bmu_k(i)\bmv_k(i). \label{eq:errorpsi}
\end{align}
Note that by resorting to \eqref{eq:gap_aposterior}, the local combination step \eqref{eq:eb_atc_combination} can be expressed as
\begin{align}
\bmw_k(i) = a_{kk}\bpsi_k(i)+\sum \limits_{\ell \in \calN_k\backslash k} a_\lk \left( \bpsi_\ell(i) - \bepsilon_\ell(i) \right), 
\end{align}
then subtract both sides of the above equation from $w^\circ$ we obtain
\begin{align}
\err{\bmw}_k(i)=\sum_{\ell\in\calN_k} a_\lk \err{\bpsi}_\ell(i) + \sum_{\ell\in\calN_k\backslash k}a_\lk\bepsilon_\ell(i) . \label{eq:errorw}
\end{align}
Let $A$ be the matrix whose $(\ell,k)$-th entry is the weight $a_\lk$, also we introduce matrix $C=A-\diag{(a_{kk})_{k=1}^N}$. Then relating \eqref{eq:error_w}, \eqref{eq:gap_ap_network}, \eqref{eq:errorpsi}, and \eqref{eq:errorw} yields the following recursion:
\begin{align}
\err{\bmw}(i) = \bm{\calB}(i) \err{\bmw}(i-1) - \calA^\T\calM \bms(i) + \calC^\T\bepsilon(i), \label{eq:error_network}
\end{align}
where
\begin{align}
\calA &= A \otimes I_M,\; \calC = C \otimes I_M\\
\bm{\calB}(i) &= \calA^\T\left(I_{MN}-\calM \bm{\calR}_u(i) \right) {,} \label{eq:matrixBi}\\
\bm{\calR}_u(i) &= \diag{(\bmu_k(i)\bmu^\T_k(i))^N_{k=1}} {,} \label{eq:calRui}\\
\calM &= \diag{(\mu_k I_M)^N_{k=1}} {,} \label{eq:calM} \\
\bms(i) &= \calA^\T  \col{(\bmu_k(i) \bmv_k(i))^N_{k=1}}.
\end{align}

\subsection{Mean Error Analysis}
Suppose Assumption \ref{asmp:regressor} and Assumption \ref{asmp:noise} hold, then by taking expectation on both sides of \eqref{eq:error_network} we have the following recursion model for the network mean error,
\begin{align}
\E[\err{\bmw}(i)] = \calB\E[\err{\bmw}(i-1)]  + \calC^\T\E[\bepsilon(i)], \label{eq:error_mean}
\end{align}
where
\begin{align}
\calB &= \E[\bm{\calB}] =\calA^\T\left(I_{MN}-\calM\calR_u\right), \label{eq:matrixB}\\
\calR_u &= \E\left[\bm{\calR}_u(i)\right]=\diag{(R_{u,k})^N_{k=1}}. \label{eq:calRu}
\end{align}

We have the following result on the asymptotic behavior of the mean error.

\begin{Theorem}	\label{thm:mean}
{\rm (Mean Error Stability)} Suppose that Assumption \ref{asmp:regressor} and Assumption \ref{asmp:noise} hold. Then, the network mean error vector of EB-ATC, i.e., $\E[\err{\bmw}(i)]$, is bounded input bounded output (BIBO) stable in steady state if the step-size $\mu_k$ is chosen such that 
\begin{align}
\mu_k < \frac{2}{\lambda_{\max} (R_{u,k})} \label{eq:mean_mu}.
\end{align}
In addition, the block maximum norm of the network mean error is upper-bounded by
\begin{align}
\frac{\alpha}{1-\beta}\cdot\max_{1\le k\le N} \left(\frac{\delta_k}{\lambda_{\min}(Y_k)}\right)^{\frac{1}{2}}, \label{eq:bound_mean_error}
\end{align}
where,
\begin{align}
\alpha=\max_{1\le k\le N}(1-a_{kk}), \;\beta=\norm{I_{MN}-\calM\calR_u}_{b,\infty}.
\end{align}
\end{Theorem}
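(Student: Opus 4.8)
The plan is to analyze the linear recursion \eqref{eq:error_mean} directly in the block maximum norm, treating $\calC^\T\E[\bepsilon(i)]$ as a bounded driving input. Both assertions—BIBO stability and the steady-state bound \eqref{eq:bound_mean_error}—follow once I establish that $\norm{\calB}_{b,\infty}\le\beta<1$ under the step-size condition \eqref{eq:mean_mu}, together with suitable norm bounds on the coefficient matrix $\calC^\T$ and on the input $\E[\bepsilon(i)]$.

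First I would bound the system matrix. Writing $\calB=\calA^\T(I_{MN}-\calM\calR_u)$ and using submultiplicativity of the induced block maximum norm, $\norm{\calB}_{b,\infty}\le\norm{\calA^\T}_{b,\infty}\,\norm{I_{MN}-\calM\calR_u}_{b,\infty}$. Because the combination weights satisfy \eqref{eq:weights_A}, the matrix $A$ is column-stochastic, so $\norm{\calA^\T}_{b,\infty}=1$: the block maximum norm of $A^\T\otimes I_M$ equals the maximum absolute row sum of $A^\T$, and each row sum of $A^\T$ is a column sum of $A$, which equals $1$. Since $I_{MN}-\calM\calR_u=\diag{(I_M-\mu_k R_{u,k})_{k=1}^N}$ is block diagonal with symmetric blocks, its block maximum norm is $\beta=\max_k\rho(I_M-\mu_k R_{u,k})=\max_k\max_n|1-\mu_k\lambda_n(R_{u,k})|$. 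The condition \eqref{eq:mean_mu}, together with the positive definiteness of $R_{u,k}$ from Assumption~\ref{asmp:regressor}, forces $0<\mu_k\lambda_n(R_{u,k})<2$ for every eigenvalue, hence $\beta<1$ and therefore $\norm{\calB}_{b,\infty}\le\beta<1$. This gives $\rho(\calB)\le\norm{\calB}_{b,\infty}<1$, so the homogeneous dynamics are stable; since the input $\calC^\T\E[\bepsilon(i)]$ is bounded (each $\bepsilon_k(i)$ is bounded by Lemma~\ref{lemma:lemma1}), BIBO stability follows.

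Next I would pin down the two remaining constants. Since $C=A-\diag{(a_{kk})_{k=1}^N}$ has zero diagonal and off-diagonal entries equal to those of $A$, the same row-sum computation gives $\norm{\calC^\T}_{b,\infty}=\max_k\sum_{\ell\ne k}a_{\ell k}=\max_k(1-a_{kk})=\alpha$. For the input term, $\norm{\E[\bepsilon(i)]}_{b,\infty}=\max_k\norm{\E[\bepsilon_k(i)]}\le\max_k\E\norm{\bepsilon_k(i)}\le\max_k(\delta_k/\lambda_{\min}(Y_k))^{1/2}$, where the first inequality is Jensen's and the second is Lemma~\ref{lemma:lemma1}.

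Finally, taking the block maximum norm of \eqref{eq:error_mean} and inserting the three bounds yields the scalar recursion $b_i\le\beta\, b_{i-1}+\alpha\max_k(\delta_k/\lambda_{\min}(Y_k))^{1/2}$ with $b_i=\norm{\E[\err{\bmw}(i)]}_{b,\infty}$. Unrolling this geometric recursion and letting $i\to\infty$, the transient term $\beta^i b_0$ vanishes because $\beta<1$, while the summed input contributes a factor $1/(1-\beta)$, producing exactly \eqref{eq:bound_mean_error}. The main obstacle is the first step: correctly evaluating the block maximum norms $\norm{\calA^\T}_{b,\infty}=1$ and $\norm{\calC^\T}_{b,\infty}=\alpha$ via the reduction to row sums of the underlying $N\times N$ stochastic matrices, and verifying that \eqref{eq:mean_mu} is precisely the condition rendering the block-diagonal factor a contraction in this norm; the remainder is a routine geometric-series estimate.
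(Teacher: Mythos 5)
Your proposal is correct and follows essentially the same route as the paper's proof: bound the driving input $\calC^\T\E[\bepsilon(i)]$ via Jensen's inequality and Lemma~\ref{lemma:lemma1}, evaluate $\norm{\calA^\T}_{b,\infty}=1$ and $\norm{\calC^\T}_{b,\infty}=\alpha$ from the (left-)stochasticity of $A$, show the block-diagonal factor satisfies $\beta<1$ under \eqref{eq:mean_mu}, and unroll the recursion \eqref{eq:error_mean} as a geometric series to obtain $\frac{\alpha}{1-\beta}\max_{1\le k\le N}\left(\delta_k/\lambda_{\min}(Y_k)\right)^{1/2}$. The only cosmetic difference is that you derive the block-maximum-norm identities directly by row-sum reductions (and work with $\norm{\calB}_{b,\infty}<1$ rather than $\rho(\calB)<1$), whereas the paper invokes Lemmas D3--D6 of \cite{Sayed:BC14} for the same facts.
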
 
\begin{proof}
See Appendix~\ref{sec:appdx_B}
\end{proof}

\subsection{Mean-square Error Analysis}
Due to the triggering mechanism and resulting \emph{a posterior} gap, \eqref{eq:gap_ap_network} correlates with the error vectors \eqref{eq:error_phi} and \eqref{eq:error_w}, and explicitly characterizing the exact network MSD of EB-ATC is technically difficult. Instead, we study the upper bound of the network MSD. First, we derive the MSD recursions as follows. From the recursion \eqref{eq:error_network}, we have the following for any compatible non-negative
definite matrix $\Sigma$:
\begin{align}
\norm{\err{\bmw}(i)}^2_\Sigma = 
&\err{\bmw}(i-1)^\T\bm{\calB}(i)^\T\Sigma\bm{\calB}(i)\err{\bmw}(i-1) + 
\bms(i)^\T\calM^\T\calA\Sigma\calA^\T\calM\bms(i) +
\bepsilon(i)^\T\calC\Sigma\calC^\T\bepsilon(i) \nonumber \\ 
&+2\err{\bmw}(i-1)^\T\bm{\calB}(i)^T\Sigma\calC^\T\bepsilon(i) - 
2\bms(i)\calM^\T\calA\Sigma\calC^\T\bepsilon(i) -
2\err{\bmw}(i-1)^\T\bm{\calB}(i)^\T\Sigma\calA^\T\calM\bms(i) {.} \label{eq:se_recursion}
\end{align}

Taking expectation on both sides of the above expression, the last term evaluates to zero under Assumption~\ref{asmp:regressor}-\ref{asmp:noise}, and we have
\begin{align}
\E\norm{\err{\bmw}(i)}^2_\Sigma = 
&\E\norm{\err{\bmw}(i-1)}^2_{\Sigma'}  + 
t_2 + t_3 + 2t_4 - 2t_5 , \label{eq:mse_recursion}
\end{align}
where the weighting matrix $\Sigma'$ is
\begin{align}
\Sigma'= \E \left[\bm{\calB}(i)^\T\Sigma\bm{\calB}(i)\right] {,} \label{eq:Sigmaprime}
\end{align}
and the last four terms in \eqref{eq:mse_recursion} are given as follows,
\begin{align}
t_2 &= \E[\bms(i)^\T\calM\calA\Sigma\calA^\T\calM\bms(i)], \label{eq:mse_term2} \\
t_3 &= \E[\bepsilon(i)^\T\calC\Sigma\calC^\T\bepsilon(i)], \label{eq:mse_term3} \\
t_4 &= \E[\err{\bmw}(i-1)^\T\bm{\calB}(i)^\T\Sigma\calC^\T\bepsilon(i)], \label{eq:mse_term4}\\
t_5 &= \E[\bms(i)\calM^\T\calA\Sigma\calC^\T\bepsilon(i)]. \label{eq:mse_term5}
\end{align}
Further, we let $\sigma = \Vc(\Sigma)$ and $\sigma' = \Vc(\Sigma')$. We then have $\sigma' = \calE \sigma$, where
\begin{align} 
\calE &= \E\left[ \bm{\calB}(i)^\T \otimes \bm{\calB}(i)^\T \right] \nonumber \\
&= [ I_{M^2N^2} - I_{MN}\otimes\calM\calR_u - \calM\calR_u\otimes I_{MN} + \left( \calM\otimes\calM\right)  \E\left(\bm{\calR}_u(i)\otimes \bm{\calR}_u(i)\right) ] \; \calA\otimes\calA . \label{eq:matrixE}
\end{align}
So that \eqref{eq:mse_recursion} can be rewritten as,
\begin{align}
\E\norm{\err{\bmw}(i)}^2_\sigma = \E\norm{\err{\bmw}(i)}^2_{\calE\sigma} + t_2 + t_3 + 2t_4 - 2t_5 {.} \label{eq:mse_recursion_v2}
\end{align}
Next, we derive the expression and bounds for terms
\subsubsection{Term $t_2$}
For the term $t_2$, we have
\begin{align}
t_2 
&= \E\left[\Tr\left(\calA^\T\calM\bms(i)\bms(i)^\T\calM\calA\Sigma\right)\right] \nonumber\\
&= \Tr\left[\calA^\T\calM\E\left(\bms(i)\bms(i)^\T\right)\calM\calA\Sigma\right] \nonumber\\
&=\Tr\left(\calA^\T\calM\calS\calM\calA \Sigma\right)  \nonumber\\
&= \Vc\left(\calA^\T\calM\calS\calM\calA\right)^\T \sigma , \label{eq:mse_term2_v2} 
\end{align} 
where the equality \eqref{eq:mse_term2_v2} follows from the identity $\Tr(AB) = \Vc(A^T)^T\Vc(B)$, and
\begin{align}
\calS = \diag{(\sigma^2_{v,k} R_{u,k})^N_{k=1}} . \label{eq:matrixS}
\end{align}
\subsubsection{Term $t_3$}
Similarly, we have the following for the term $t_3$,
\begin{align}
t_3 
&= \Tr \left[\calC^\T\E\left(\bepsilon(i)\bepsilon(i)^\T\right)\calC\Sigma\right]  \nonumber\\
&= \Vc(\calC)^\T\left[\Sigma\otimes\E\left(\bepsilon(i)\bepsilon(i)^\T\right) \right]\Vc(\calC)
\end{align}
Moreover, it can be verified that relationship $yy^\T\le y^\T y I_N$ holds for any vector $y\in\Real^N$, and thus  $\bepsilon(i)\bepsilon(i)^\T\le\bepsilon(i)^\T\bepsilon(i) I_{MN}$ follows immediately, so that we have
\begin{align}
\E\left(\bepsilon(i)\bepsilon(i)^\T\right) 
&\le\bepsilon(i)^\T\bepsilon(i) I_{MN} \nonumber\\
&=\sum_{k=1}^{N}\norm{\bepsilon_k(i)}^2 I_{MN} \nonumber\\
&=\sum_{k=1}^{N}\left(\frac{\delta_k}{\lambda_{\min}(Y_k)}\right)^{\frac{1}{2}} I_{MN} {.}
\end{align}
Now, letting 
\begin{align}
\Delta = \sum_{k=1}^{N}\left(\frac{\delta_k}{\lambda_{\min}(Y_k)}\right)^{\frac{1}{2}} ,
\end{align}
due to $\Sigma\ge 0$ the following results follows,
\begin{align}
\Sigma\otimes\left[\E\left(\bepsilon(i)\bepsilon(i)^\T\right) - \Delta I_{MN}\right] \le 0,
\end{align}
and therefore,
\begin{align}
\Vc(\calC)^\T\left\lbrace\Sigma\otimes\left[\E\left(\bepsilon(i)\bepsilon(i)^\T\right) - \Delta I_{MN}\right]\right\rbrace \Vc(\calC) \le 0,
\end{align}
or equivalently,
\begin{align}
\Vc(\calC)^\T\left[\Sigma\otimes\E\left(\bepsilon(i)\bepsilon(i)^\T\right) \right]\Vc(\calC) 
&\le \Delta\cdot\Vc(\calC)^\T\left(\Sigma\otimes I_{MN}\right)\Vc(\calC) \nonumber\\
&=\Delta\cdot\Tr\left(\calC^\T\calC\Sigma\right),
\end{align}
which further implies that
\begin{align}
t_3 \le \Delta\cdot\Vc\left(\calC^\T\calC\right)\sigma {.} \label{eq:mse_term3_bound}
\end{align}
\subsubsection{Term $t_4$}
Since matrix $\Sigma$ is positive semi-definite, so that we have $\Sigma=\Theta\Theta^\T$. Then, let
\begin{align}
P &= \err{\bmw}(i)^\T\bm{\calB}^\T(i)\Theta, \nonumber\\
Q &= \bepsilon(i)^\T\calC\Theta . \label{eq:mse_term4_PQ}
\end{align}
From the fact $(P-Q)(P-Q)^\T\ge0$ we have the following,
\begin{align}
PQ^\T+QP^\T\le PP^\T+QQ^\T.
\end{align}
Substituting \eqref{eq:mse_term4_PQ} into the above inequality and taking expectation on both sides gives,
\begin{align}
2t_4
&\le \E\left[\err{\bmw}(i-1)^\T\bm{\calB}(i)^\T\Sigma\bm{\calB}(i)\err{\bmw}(i-1)\right]  + 
\E\left[\bepsilon(i)^\T\calC\Sigma\calC^\T\bepsilon(i)\right] \nonumber\\
&= \E\norm{\err{\bmw}(i-1)}^2_{\Sigma'} + t_3. \label{eq:mse_term4_bound} 
\end{align}

\subsubsection{Term $t_5$}
Applying manipulations similar with $t_3$ to $t_3$, we have
\begin{align}
t_5 
&= \Tr\left[\calC^\T\E\left(\bepsilon(i)\bms(i)^\T\right)\calM\calA\Sigma\right]  \nonumber\\
&= \Vc\left( \calC^\T\E\left(\bepsilon(i)\bms(i)^\T\right)\calM\calA\Sigma\right) ^\T\sigma . \label{eq:mse_term5_v1}
\end{align}
To facilitate the evaluation of the covariance matrix $\E\left(\bepsilon(i)\bms(i)^\T\right)$, we derive its $(k,\ell)$-th block entry, i.e., $\E\left[\bepsilon_k(i)\bmu_\ell(i)\bmv_\ell(i)\right]$. To this end, substituting \eqref{eq:datamodel} into \eqref{eq:local_update}, we can express $\bpsi_k(i)$ as follows,
\begin{align}
\bpsi_k(i)=\bmw_k(i-1) + \mu_k\bmu_k(i)\bmu_k(i)^\T\err{\bmw}_k(i-1) + \mu_k\bmu_k(i)\bmv_k(i),
\end{align}
so that we have
\begin{align}
\E\left[\bpsi_k(i)\bmu_\ell(i)\bmv_\ell(i)\right] =&
\E\left[\bmw_k(i-1)\bmu_\ell(i)\bmv_\ell(i)\right] + \mu_k\E\left[\bmu_k(i)\bmu_k(i)^\T\err{\bmw}_k(i-1)\bmu_\ell(i)\bmv_\ell(i)\right] \nonumber\\
&+\mu_k\E\left[\bmu_k(i)\bmv_k(i)\bmu_\ell(i)\bmv_\ell(i)\right] . \label{eq:mse_term5_eqn1}
\end{align}
Note that \eqref{eq:mse_term5_eqn1} evaluates to zero if $\ell\neq k$, and when $\ell=k$ the first two terms in \eqref{eq:mse_term5_eqn1} evaluate to zero, and the last term equals $\mu_k\sigma^2_{v,k}R_{u,k}$. In addition, $\E\left[\overline{\bpsi}_k(i)\bmu_\ell(i)\bmv_\ell(i)\right]=0$ for all $\{k,\ell\}\in V$. Therefore, at particular time instant $i$, by conditioning on $\bgamma_k(i)=\gamma_k(i)$ for all $k$, from \eqref{eq:gap_aprior} and \eqref{eq:aposteriorgap_cases} we conclude that
\begin{align}
\E\left[\bepsilon_k(i)\bmu_\ell(i)\bmv_\ell(i)\right] =
\begin{cases}
0, &\text{ if }  \ell\neq k,\nonumber\\
\mu_k\sigma^2_{v,k}R_{u,k}, &\text{ if } \ell=k \text{ and } \gamma_k(i)=0.
\end{cases}
\end{align}
So that the term $t_5$ can be expressed as,
\begin{align}
t_5 = -\Vc\left(\calC^\T\calG(i)\calM\calS\calM\calA\right)^\T\sigma, \label{eq:mse_term5_v2}
\end{align}
where matrix $\calS$ is given in \eqref{eq:matrixS} and
\begin{align}
\calG(i) = \E\diag{(\bgamma_k(i) I_M)^N_{k=1}}-I_{MN} . \label{eq:matrixG}
\end{align}
Therefore, substituting \eqref{eq:mse_term2_v2}, \eqref{eq:mse_term3_bound}, \eqref{eq:mse_term4_bound}, and \eqref{eq:matrixG} into \eqref{eq:mse_recursion_v2}, we have the following bound for the network MSD at time instant $i$,
\begin{align}
\E\norm{\err{\bmw}(i)}^2_\sigma\le
&\E\norm{\err{\bmw}(i-1)}^2_{\calD\sigma} + [f_1+f_2+f_3(i)]^\T\sigma, \label{eq:msd_bound_recursion}
\end{align}
where $\calD = 2 \calE$ and matrix $\calE$ is given in \eqref{eq:matrixE}, and
\begin{align}
f_1&=\Vc\left(\calA^\T\calM\calS\calM\calA\right), \nonumber\\
f_2&=2\Delta\cdot\Vc\left(\calC^\T\calC\right), \nonumber\\
f_3(i)&=2\Vc\left( \calC^\T\calG(i)\calM\calS\calM\calA\right) . \label{eq:f_123}
\end{align}

\begin{Assumption}\label{asmp:Ruk}
	Each node $k$ adopts a regressor covariance matrix $R_{u,k}$ whose eigenvalues satisfy
		\begin{align}
		\lambda_\max(R_{u,k}) < \left(\frac{2+\sqrt{2}}{2-\sqrt{2}}\right) \lambda_\min(R_{u,k}). \label{eq:assumpRuk}
		\end{align}
\end{Assumption}
\begin{Theorem}\label{thm:msd}
{\rm (Mean-square Error Behavior)} Suppose that Assumptions \ref{asmp:regressor}-\ref{asmp:noise} hold. Then, as $i\rightarrow\infty$, the network MSD of EB-ATC, i.e.,  $\E\norm{\err{\bmw}(i)}^2/N$, has a finite constant upper bound if the step sizes $\{\mu_k\}$ are chosen such that $\rho(\calD)<1$ is satisfied. In addition, it follows that matrix $\calD$ can be approximated by $\calD\approx\calF=\calD+O(\calM^2)$, where
\begin{align}
\calF= 2\calB^\T\otimes\calB^\T, \label{eq:matrixF} 
\end{align}
so that if Assumption~\ref{asmp:Ruk} also holds and $\{\mu_k\}$ also satisfy 
\begin{align}
\frac{1-\frac{\sqrt{2}}{2}}{\lambda_{\min}(R_{u,k})}<\mu_k < \frac{1+\frac{\sqrt{2}}{2}}{\lambda_{\max} (R_{u,k})}, \label{eq:msd_bound_mu}
\end{align}
an upper bound of the network MSD in steady state is given by
\begin{align}
\frac{1}{N}\left[(f_1+f_2)^\T\left(I_{M^2N^2}-\calF\right)^{-1} + f_{3,\infty}\right]\Vc(I_{MN}) +O(\mu_{\operatorname{max}}^2),
\end{align}
where,
\begin{align}
\mu_{\operatorname{max}}&=\max_{1\le k\le N}\{\mu_k\}, \label{eq:max_stepsize}\\
f_{3,\infty}&= \lim\limits_{i\rightarrow\infty} \sum_{j=0}^{i-1}f_3(i-j)^\T\calF^{j}. \label{eq:f3_infty}
\end{align}
\end{Theorem}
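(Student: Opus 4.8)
The plan is to read the scalar inequality \eqref{eq:msd_bound_recursion} as a linear recursion in the free weight vector $\sigma$ and unroll it. Writing $g(i)=f_1+f_2+f_3(i)$ and repeatedly substituting $\calD\sigma$ for $\sigma$, I obtain for every $i$
\begin{align}
\E\norm{\err{\bmw}(i)}^2_\sigma \le \E\norm{\err{\bmw}(0)}^2_{\calD^i\sigma} + \sum_{j=0}^{i-1} g(i-j)^\T \calD^j \sigma. \nonumber
\end{align}
The network MSD is recovered by setting $\sigma = \Vc(I_{MN})$ and dividing by $N$, so the entire theorem reduces to analysing the $i\to\infty$ limit of the right-hand side.

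For the first claim I would show this limit is a finite constant whenever $\rho(\calD)<1$. The initial-condition term decays since $\calD^i\to 0$, and the forcing sum splits into a time-invariant part $\sum_{j}(f_1+f_2)^\T\calD^j\sigma \to (f_1+f_2)^\T(I_{M^2N^2}-\calD)^{-1}\sigma$, a convergent Neumann series because $\rho(\calD)<1$, plus a time-varying convolution $\sum_j f_3(i-j)^\T\calD^j\sigma$. The latter converges because $f_3(i)$ is uniformly bounded in $i$: from \eqref{eq:matrixG} the entries of $\calG(i)$ lie in $[-1,0]$, so $\norm{f_3(i)}$ is controlled uniformly and the geometric decay of $\calD^j$ makes the series absolutely summable. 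This establishes the finite constant upper bound.

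Next I would justify the approximation $\calD=\calF+O(\calM^2)$ and extract the explicit step-size condition. Since $\calM$ and $\calR_u$ are block-diagonal and commute, $\calB^\T=(I_{MN}-\calM\calR_u)\calA$, hence
\begin{align}
\calB^\T\otimes\calB^\T = \left[I_{MN} - I_{MN}\otimes\calM\calR_u - \calM\calR_u\otimes I_{MN} + (\calM\otimes\calM)(\calR_u\otimes\calR_u)\right](\calA\otimes\calA). \nonumber
\end{align}
Comparing with \eqref{eq:matrixE}, the only discrepancy between $\calE$ and $\calB^\T\otimes\calB^\T$ is the bounded covariance factor $\E(\bm{\calR}_u(i)\otimes\bm{\calR}_u(i))-\calR_u\otimes\calR_u$ premultiplied by $\calM\otimes\calM=O(\calM^2)$; as $\calD=2\calE$ this gives $\calD=\calF+O(\calM^2)$. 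For stability I compute $\rho(\calF)=2\rho(\calB)^2$ from the Kronecker eigenvalue rule, bound $\rho(\calB)\le\norm{\calB}_{b,\infty}\le\norm{\calA^\T}_{b,\infty}\,\norm{I_{MN}-\calM\calR_u}_{b,\infty}=\max_{k,n}|1-\mu_k\lambda_n(R_{u,k})|$ using the left-stochasticity of $A$ (so $\norm{\calA^\T}_{b,\infty}=1$), and note that $\rho(\calF)<1$ is equivalent to $\max_{k,n}|1-\mu_k\lambda_n(R_{u,k})|<\tfrac{1}{\sqrt 2}$. This unfolds into exactly the two-sided bound \eqref{eq:msd_bound_mu}, whose feasibility interval is nonempty precisely when the eigenvalue-spread condition \eqref{eq:assumpRuk} of Assumption~\ref{asmp:Ruk} holds. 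Replacing $\calD$ by $\calF$ in the limit and collecting the induced perturbation into the residual then yields the stated steady-state bound with an $O(\mu_{\operatorname{max}}^2)$ correction.

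I expect the main obstacle to be the time-varying forcing term $f_3(i)$: unlike the standard diffusion analysis, the convolution $\sum_j f_3(i-j)^\T\calF^j$ does not telescope into a closed resolvent form and must be handled as the limit $f_{3,\infty}$ in \eqref{eq:f3_infty}, so its existence must be argued from uniform boundedness of $\{\calG(i)\}$ together with $\rho(\calF)<1$. A secondary difficulty is the bookkeeping of the $O(\calM^2)$ perturbation: replacing $(I_{M^2N^2}-\calD)^{-1}$ by $(I_{M^2N^2}-\calF)^{-1}$ is only legitimate away from the stability boundary, where the resolvent stays bounded, so the $O(\mu_{\operatorname{max}}^2)$ claim implicitly requires the step sizes to lie strictly inside the region carved out by \eqref{eq:msd_bound_mu}.
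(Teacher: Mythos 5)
Your proposal is correct and matches the paper's own proof in Appendix~\ref{sec:appdx_C} essentially step for step: unrolling \eqref{eq:msd_bound_recursion}, summing the constant forcing into the resolvent $\left(I_{M^2N^2}-\calD\right)^{-1}$, arguing absolute convergence of the $f_3$ convolution from the uniform boundedness of $\calG(i)$ together with $\rho(\calD)<1$, identifying $\calD=\calF+O(\calM^2)$ by expanding $\calB^\T\otimes\calB^\T$, and converting $\rho(\calF)=2\rho(\calB)^2<1$ into \eqref{eq:msd_bound_mu} via $\rho(\calB)\le\max_{k,m}\left|1-\mu_k\lambda_m(R_{u,k})\right|$, with feasibility of the step-size interval given exactly by Assumption~\ref{asmp:Ruk}. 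The only differences are cosmetic: you bound $\rho(\calB)$ by submultiplicativity of the block maximum norm where the paper invokes Lemma~D.5 of \cite{Sayed:BC14}, and your claim that $\rho(\calF)<1$ is ``equivalent to'' $\max_{k,m}|1-\mu_k\lambda_m(R_{u,k})|<\tfrac{1}{\sqrt{2}}$ should read ``implied by,'' since the norm bound makes this only a sufficient condition, precisely as the paper treats it.
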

\begin{proof}
See Appendix~\ref{sec:appdx_C}
\end{proof}

\begin{Remark}
Assumption~\ref{asmp:Ruk} is needed additionally to ensure that the set of $\mu_k$ in \eqref{eq:msd_bound_mu} is non-empty. Note that if $R_{u,k}$ is chosen to be $R_{u,k} = \sigma^2_{u,k} I_M$, the above assumption \eqref{eq:assumpRuk} is automatically met, and condition \eqref{eq:msd_bound_mu} becomes
\begin{align}
\frac{2-\sqrt{2}}{2\sigma^2_{u,k}}<\mu_k < \frac{2+\sqrt{2}}{2\sigma^2_{u,k}}. \label{eq:msd_bound_mu2}
\end{align}
Besides, although diffusion adaptation strategies \cite{CatSayed:J10,ZhaoSayed:J12,TuSayed:J12,SayedTuChen:M13,Sayed:BC14,Sayed:IP14} usually do not have lower bounds for step sizes on the stability of network MSD, the condition \eqref{eq:msd_bound_mu} is a sufficient condition to ensure the upper bound of the network MSD \eqref{eq:msd_bound_recursion} converges at steady state, so that \eqref{eq:msd_bound_mu} is only sufficient (but not necessary) for the stability of the exact network MSD in steady state. Indeed, numerical studies also suggest that without relying on Assumption~\ref{asmp:Ruk} and choosing a step size even smaller than the lower bounds in \eqref{eq:msd_bound_mu} will not cause the divergence of the network MSD in steady state.
\end{Remark}

\section{Simulation Results}\label{sec:simulation}
In this section, numerical examples are provided to illustrate the MSD performance and energy-efficiency of the proposed EB-ATC, and to compare against ATC and the non-cooperative LMS algorithm. We performed simulations on a network with $N=60$ nodes as depicted in Fig.~\ref{fig:sim}(a). The measurement noise powers $\{\sigma^2_{v,k}\}$ are generated from a uniform distribution over $[-25, -10]$ dB. We consider a parameter of interest $w^\circ$ with dimension $M=10$, and suppose that the zero-mean regressor $u_k(i)$ has covariance $R_{u,k} = \sigma^2_{u,k} I_M$, where the coefficients $\{\sigma^2_{u,k}\}$ are drawn uniformly from the interval $[1,2]$. For the ease of implementation, we adopt constant and uniform triggering thresholds $\delta_k(i)=\delta$, and identity weighting matrix $Y_k=I_M$ for the event triggering function of every node. Moreover, we use the \emph{Metropolis} rule \cite{Sayed:BC14} for the diffusion combination \eqref{eq:eb_atc_combination}. All the simulations results are averaged over 200 Monte Carlo runs.

\begin{figure*}[!t]
	\centering
	\subfloat[Network topology]{\includegraphics[width=0.28\textwidth]{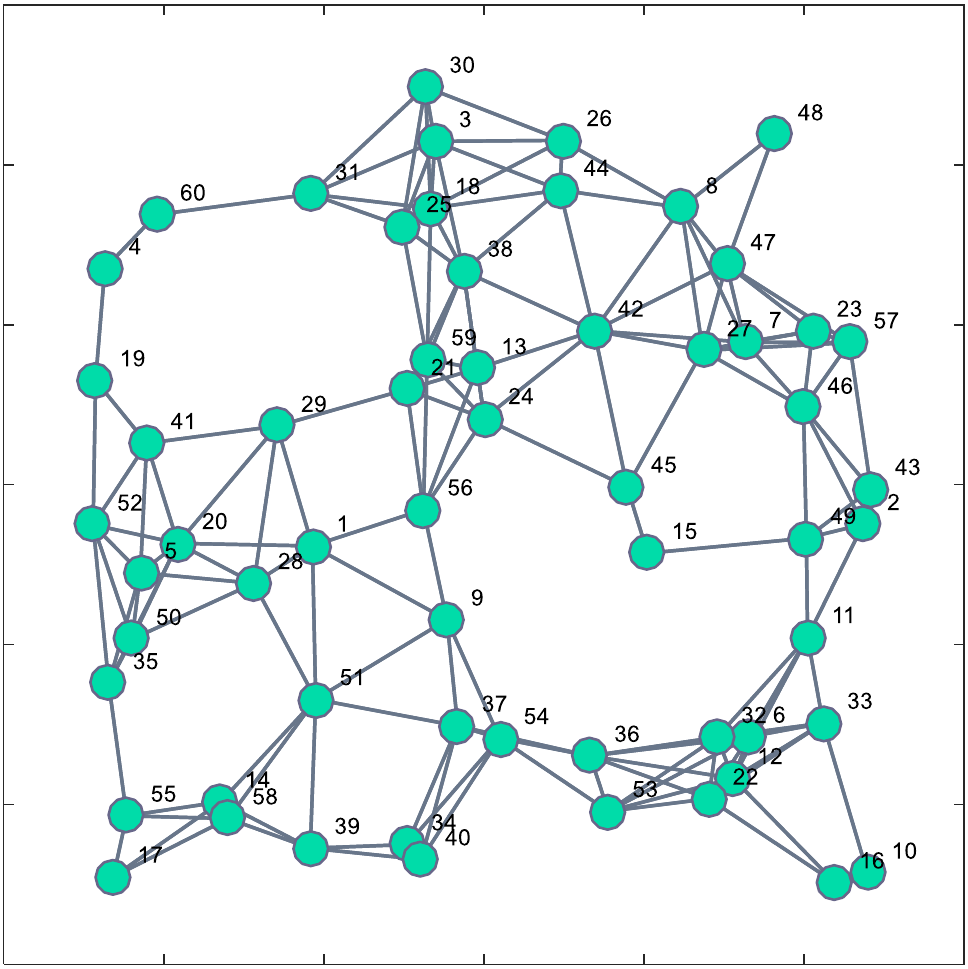}%
		\label{fig:topology}}
	\hfil
	\subfloat[MSD performance]{\includegraphics[width=0.35\textwidth]{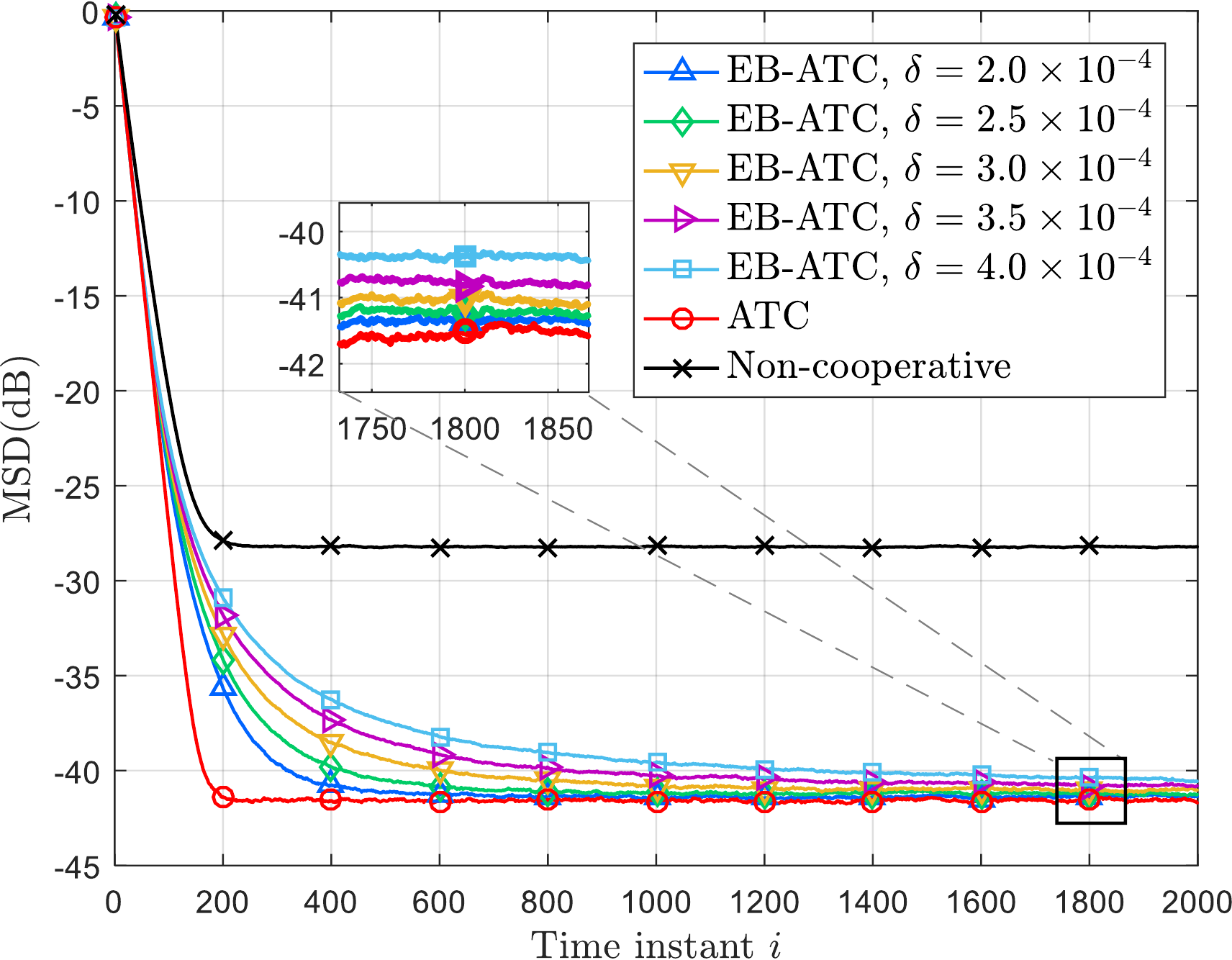}%
		\label{fig:msd}}
	\hfil
	\subfloat[Average ENTR]{\includegraphics[width=0.35\textwidth]{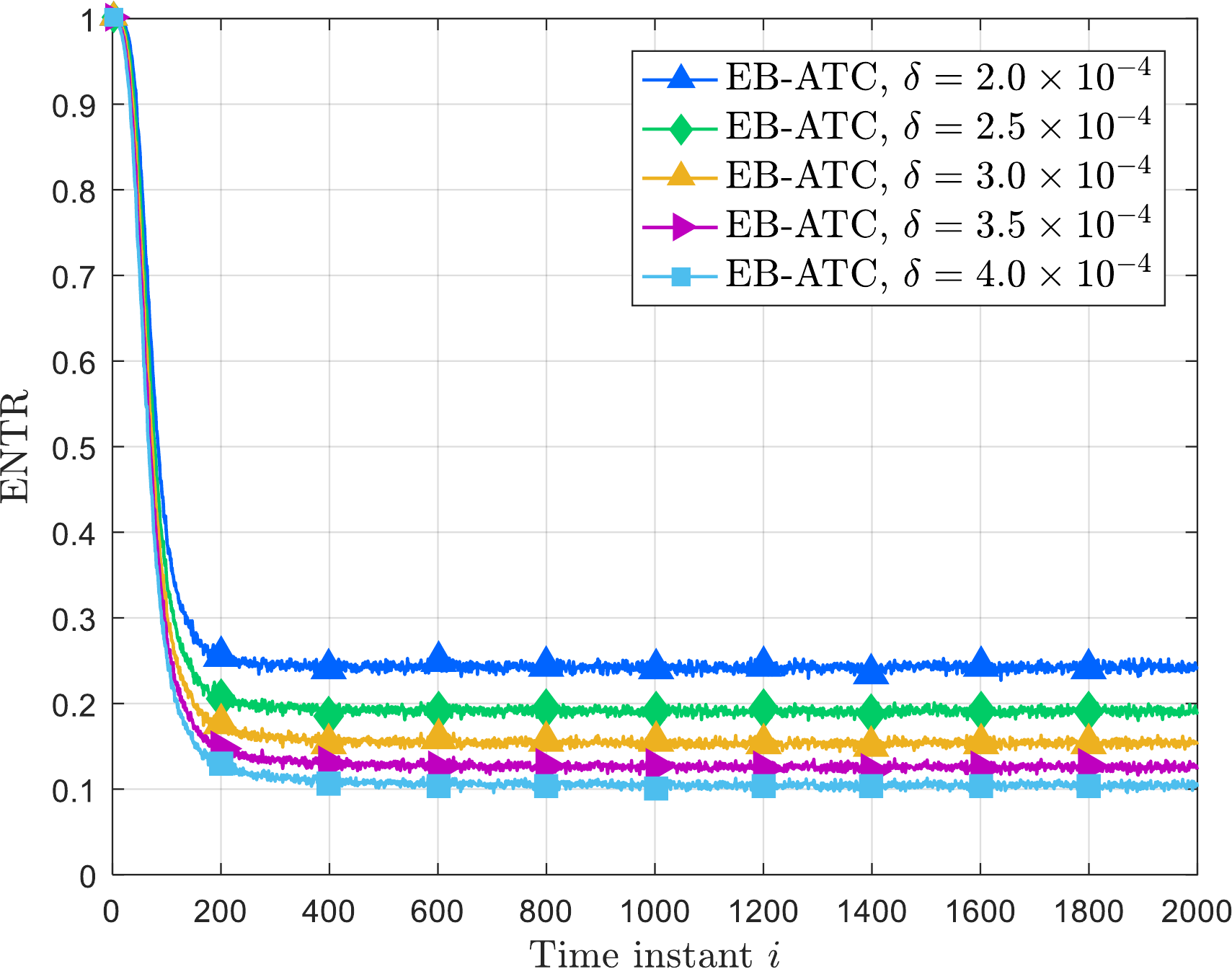}%
		\label{fig:entr}}	
	\caption{Simulation results for the network.}
	\label{fig:sim}
\end{figure*}

From Fig.~\ref{fig:sim}(b), it can be observed that compared with the ATC strategy, MSDs of the proposed EB-ATC in steady-state are higher by a few dBs, but still much lower than that of the non-cooperative LMS algorithm, which demonstrates the capability of EB-ATC to preserve the benefits of diffusion cooperation. On the other hand, the convergence of EB-ATC is relatively slower. This is because in the transient phase, the event-based communication mechanism of EB-ATC restricts the frequency of exchanging the newest local intermediate estimates $\{\bpsi_{k,i}\}$, for the purpose of energy saving. This leads to inferior transient performance compared to ATC. 

On the other hand, EB-ATC achieves significant communication overhead savings compared to ATC. To visualize this, we define the expected network triggering rate (ENTR) as follows:
\begin{align}
\operatorname{ENTR}(i) = \frac{1}{N}\sum_{k=1}^{N} \E \gamma_k(i). \label{eq:ENTR}
\end{align}
The ENTR at time instant $i$ captures how frequently communication is triggered by each node at that time instant $i$, on average. ENTR is directly proportional to the average communication overhead incurred by the nodes in the network at each time instant. From \eqref{eq:ENTR}, it is clear that $0\le\operatorname{ENTR}(i)\le1$, so a smaller value of $\operatorname{ENTR}(i)$ implies a lower energy consumption. Note that ATC has ENTR$(i)=1$ for all time instants $i$. From Fig.~\ref{fig:sim}(c), we observe that the ENTR for EB-ATC decays rapidly over time during the transient phase, and for all the different triggering thresholds we tested, EB-ATC uses less than 30\% of the communication overhead of ATC after the time instant $i\approx 200$, which is the average time that the MSD of ATC is within 90\% of its steady-state value. This demonstrates that even though EB-ATC has not reached steady-state (at $i\approx 600$), communication between nodes do not trigger very frequently as the intermediate estimates do not change significantly after this time instant. Furthermore, in steady state, although each node maintains estimates that are close to the true parameter value, communication triggering does not completely stop. This is due to occasional abrupt changes in the random noise and regressors, which can make the local estimate update deviate significantly. This is in the same spirit of why MSD does not converge to zero. 

It is also worth mentioning that, although in theory the methods in the literature \cite{ArabloueiLMS:J14,SayinKozat:J14,HarraneRicard:C16} can save more energy by transmitting only a few entries or compressed values, for real-time applications they may not be as reliable as EB-ATC in under the same channel conditions, especially when the SNR is poor. To guarantee successful diffusion cooperation among neighborhood, higher channel SNR or more robust encoding scheme is required for \cite{ArabloueiLMS:J14,SayinKozat:J14,HarraneRicard:C16}, whereas EB-ATC is simpler yet effective.

\section{Conclusion}\label{sec:conclusion}
We have proposed an event-based diffusion ATC strategy where communication among neighboring nodes is triggered only when significant changes occur in the local updates. The proposed algorithm is not only able to significantly reduce communication overhead, but can still maintain good MSD performance at steady-state compared with the conventional diffusion ATC strategy. Future research includes analyzing the expected triggering rate theoretically as well as characterizing the rate of convergence, and to establish their relationship with the triggering threshold, so that the thresholds can be selected to optimize its performance.

\appendices
\section{Proof of Lemma~\ref{lemma:lemma1}}\label{sec:appdx_A}
Since $Y_k$ is positive semi-definite, and therefore real symmetric, so that there exists an unitary matrix $U$ such that
\begin{align}
Y_k = U \diag{\lambda_m(Y_k)_{m=1}^N} U^\T,
\end{align}
Let $\phi_m, m=1,2,\dots,M$ be the eigenvectors of $Y_k$, so we have
\begin{align}
U = [\phi_1, \phi_2, \cdots, \phi_M].
\end{align}
Recall that any vector $x\in\Real^M$ can be expressed as
\begin{align}
x = \sum_{m}(\phi_m^\T x) \phi_m,
\end{align}
therefore it is easy to verify that
\begin{align}
\frac{\norm{x}^2_{Y_k}}{\norm{x}^2}=\frac{x^\T Y_k x}{x^\T x}=\frac{\sum_{m}\lambda_m(Y_k)(\phi_m^\T x)^2}{\sum_{m}(\phi_m^\T x)^2}\ge\lambda_{\min}(Y_k)
\end{align}
which implies
\begin{align}
\lambda_{\min}(Y_k)\cdot\norm{\bepsilon_k(i)}^2\le\norm{\bepsilon_k(i)}^2_{Y_k}.
\end{align}
Besides, from \eqref{eq:aposteriorgap_cases}, we can conclude that
\begin{align}
\norm{\bepsilon_k(i)}^2_{Y_k}\le\norm{\bepsilon_k^-(i)}^2_{Y_k}\le\delta_k(i)
\end{align} 
Therefore, we have 
\begin{align}
\lambda_{\min}(Y_k)\cdot\norm{\bepsilon_k(i)}^2\le\delta_k(i)\le\delta_k
\end{align}
which gives 
\begin{align}
\norm{\bepsilon_k(i)}\le\left(\frac{\delta_k}{\lambda_{\min}(Y_k)}\right)^{\frac{1}{2}}. 
\end{align}
The proof is complete.

\section{Proof of the Theorem~\ref{thm:mean}}\label{sec:appdx_B}
Taking block maximum norm $\norm{\cdot}_{b,\infty}$ to $\E[\bepsilon(i)]$, due to every norm is a convex function of its argument, by Jensen's inequality and Lemma~\ref{lemma:lemma1}, we have
\begin{align}
\norm{\E[\bepsilon(i)]}_{b,\infty} 
&\le \E\left[\norm{\bepsilon(i)}_{b,\infty}\right] \\
&= \E \left[\max_{1\le k\le N}\norm{\bepsilon_k(i)}\right] \label{eq:bound_apgap_equality} \\
&\le \max_{1\le k\le N} \left(\frac{\delta_k}{\lambda_{\min}(Y)}\right)^{\frac{1}{2}}, \label{eq:bound_apgap}
\end{align}
where we have used the definition of the block maximum norm in \cite{Sayed:BC14} for the equality \eqref{eq:bound_apgap_equality}, and \eqref{eq:bound_apgap} follows from the Lemma~\ref{lemma:lemma1}. The right hand side (R.H.S.) of \eqref{eq:bound_apgap} is a finite constant scalar, which implies that the input signal to the recursion \eqref{eq:error_mean}, i.e., $\E[\bepsilon(i)]$ is bounded. Therefore, the recursion \eqref{eq:error_mean} is BIBO stable if $\rho(\calB)<1$. 

In addition, since matrix $\calA^\T$ is left-stochastic, by applying the Lemma.~D5 and Lemma.~D6 in \cite{Sayed:BC14}, we have the following from \eqref{eq:matrixB}, 
\begin{align}
\rho(\calB)
&=\rho\left(\calA^\T\left(I_{MN}-\calM\calR_u\right)\right) \label{eq:rho_calB}\\
&\le\rho\left(I_{MN}-\calM\calR_u\right)\\
&=\norm{I_{MN}-\calM\calR_u}_{b,\infty}.
\end{align}
Therefore, we conclude that the network mean error is BIBO stable if 
\begin{align}
\norm{I_{MN}-\calM\calR_u}_{b,\infty} < 1, \label{eq:rho_calB_2}
\end{align}
which further yields the condition \eqref{eq:mean_mu}.

To establish the upper bound \eqref{eq:bound_mean_error}, we iterate \eqref{eq:error_mean} from $i=0$, which gives,
\begin{align}
\E[\err{\bmw}(i)]=\calB^i \E[\err{\bmw}(0)] + \sum_{j=0}^{i-1}\calB^{j} \calC^\T \E[\bepsilon(i-j)].
\end{align}
Then applying block maximum norm $\norm{\cdot}_{b,\infty}$ on both sides of the above equation, by the properties of vector norms and induced matrix norms, it can be obtained that
\begin{align}
\norm{\E[\err{\bmw}(i)]}_{b,\infty} 
&\le \norm{\calB^i}_{b,\infty}\cdot\norm{\E[\err{\bmw}(0)]}_{b,\infty} + \sum_{j=0}^{i-1}\norm{\calB^{j}}_{b,\infty}\cdot\norm{\calC^\T \E[\bepsilon(i-j)]}_{b,\infty} \\
&\le \norm{\calA^\T}_{b,\infty}^i\cdot\norm{I_{MN}-\calM\calR_u}_{b,\infty}^i\cdot\norm{\E[\err{\bmw}(0)]}_{b,\infty}\\ &\qquad+\sum_{j=0}^{i-1}\norm{\calA^\T}_{b,\infty}^{j}\cdot\norm{I_{MN}-\calM\calR_u}_{b,\infty}^{j}\cdot\norm{\calC^\T}_{b,\infty}\cdot\norm{\E[\bepsilon(i-j)]}_{b,\infty} .
\label{eq:bound_meanerror_1}
\end{align}
Let $\alpha=\norm{\calC^\T}_{b,\infty}$, from the Lemma.~D3 of \cite{Sayed:BC14} we have 
\begin{align}
\alpha=\norm{C^\T}_\infty=\max_{1\le k\le N}(1-a_{kk}).
\end{align}
Moreover, since matrix $\calA^\T$ is left-stochastic, so that we have $\norm{\calA^\T}_{b,\infty}=1$ by the Lemma.~D4 of \cite{Sayed:BC14}. Let $\beta=\norm{I_{MN}-\calM\calR_u}_{b,\infty}$, then substitute \eqref{eq:bound_apgap} into \eqref{eq:bound_meanerror_1} we obtain that,
\begin{align}
\norm{\E[\err{\bmw}(i)]}_{b,\infty} \le \norm{\E[\err{\bmw}(0)]}_{b,\infty}\cdot\beta^i + \alpha\cdot\max_{1\le k\le N} \left(\frac{\delta_k}{\lambda_{\min}(Y)}\right)^{\frac{1}{2}}\cdot\sum_{j=0}^{i-1}\beta^j . \label{eq:bound_mean_error_2}
\end{align}
If step size $\mu_k$ is chosen to satisfy $0\le\beta<1$, then letting $i\rightarrow\infty$ on both sides of \eqref{eq:bound_mean_error_2} we arrive at following inequality relationship
\begin{align}
\lim_{i\rightarrow\infty}\norm{\E[\err{\bmw}(i)]}_{b,\infty} \le \frac{\alpha}{1-\beta}\cdot\max_{1\le k\le N} \left(\frac{\delta_k}{\lambda_{\min}(Y)}\right)^{\frac{1}{2}}, \label{eq:bound_meanerror_3} 
\end{align}
and the proof is complete.	

\section{Proof of the Theorem~\ref{thm:msd}}\label{sec:appdx_C}
To obtain the upper bound of network MSD at steady state, iterating \eqref{eq:msd_bound_recursion} from $i=1$, we have 
\begin{align}
\E\norm{\err{\bmw}(i)}^2_\sigma\le
\E\norm{\err{\bmw}(i-1)}^2_{\calD\sigma} + 
(f_1+f_2)^\T\left(\sum_{j=0}^{i-1}\calD^{j}\right)\sigma+
\left(\sum_{j=0}^{i-1}f_3(i-j)^\T\calD^{j}\right)\sigma, \label{eq:msd_bound_recursion_v2}
\end{align}
where vectors $f_1$, $f_2$, and $f_3(i)$ are given in \eqref{eq:f_123}. Letting $i\to\infty$, the first term on the R.H.S. of the above inequality converges to zero, and the second term converge to a finite value $(f_1+f_2)^\T\left(I_{M^2N^2}-\calD\right)^{-1}\sigma$, if and only if $\calD^i \rightarrow 0$ as $i \rightarrow \infty$, i.e., $\rho(\calD)<1$. From \eqref{eq:matrixG} and \eqref{eq:f_123} we have $f_3(i)$ is bounded due to every entry of matrix $\calG(i)$ is bounded. Moreover, if $\rho(\calD)<1$,  there exists a norm $\norm{\cdot}_\zeta$ such that $\norm{\calD}_\zeta<1$, therefore we have 
\begin{align}
\left|f_3(i-j)^\T\calD^j\sigma\right|\le a \cdot\norm{\calD}^j_\zeta,
\end{align}
for some positive constant $a$. Since $\norm{\calD}_\zeta^j\rightarrow0$ as $j\rightarrow\infty$, the series,
\begin{align}
\sum_{j=0}^{i-1} \left| f_3(i-j)^\T\calD^{j}\sigma\right| 
\end{align}
converges as $i\rightarrow\infty$, which implies the absolute convergence of the third term of R.H.S of \eqref{eq:msd_bound_recursion_v2}.

Besides, note that the matrix $\calF$ given in \eqref{eq:matrixF} can be explicitly expressed as
\begin{align} 
\calF &= 2 \calB^\T \otimes \calB^\T  \nonumber \\
&= [ I_{M^2N^2} - I_{MN}\otimes\calM\calR_u - \calM\calR_u\otimes I_{MN} 
+ \left( \calM\otimes\calM\right)  \left(\calR_u\otimes \calR_u\right) ] \; \calA\otimes\calA . \label{eq:matrixFF}
\end{align}
Substituting \eqref{eq:matrixE} in to $\calD=2\calE$ and comparing with the above \eqref{eq:matrixFF}, we have
\begin{align}
\calD=\calF+O(\calM^2), \label{eq:calDcalF}
\end{align}
where
\begin{align}
O(\calM^2)=\left( \calM\otimes\calM\right) \left\lbrace\E\left[\bm{\calR}_u(i)\otimes \bm{\calR}_u(i)\right]-\calR_u\otimes\calR_u\right\rbrace, \label{eq:OM2}
\end{align}
so that substituting \eqref{eq:calDcalF} into the R.H.S of \eqref{eq:msd_bound_recursion_v2} gives
\begin{align}
\E\norm{\err{\bmw}(i)}^2_\sigma\le
\E\norm{\err{\bmw}(i-1)}^2_{\calF\sigma} + 
(f_1+f_2)^\T\left(\sum_{j=0}^{i-1}\calF^{j}\right)&\sigma+
\left(\sum_{j=0}^{i-1}f_3(i-j)^\T\calF^{j}\right)\sigma \nonumber\\
&+\E\norm{\err{\bmw}(i-1)}^2_{O(\calM^2)\sigma} + g(i)^\T O(\calM^2)\sigma, \label{eq:msd_bound_recursion_v3}
\end{align}
where
\begin{align}
g(i)=f_1+f_2+\sum_{j}^{i-1}f_3(j). \label{eq:vector_gi}
\end{align}
Due to the vector $g(i)$ is bounded, so that if $\rho(\calD)<1$ such that $\E\norm{\err{\bmw}(i-1)}^2_\sigma$ is bounded, then the last two terms on the R.H.S of \eqref{eq:msd_bound_recursion_v3} are negligible for sufficiently small step sizes $\{\mu_k\}$, which means the matrix $\calD$ can be approximated by $\calD\approx\calF$ if $\{\mu_k\}$ are sufficiently small and also satisfy $\rho(\calD)<1$. Therefore \eqref{eq:msd_bound_recursion_v3} can be further expressed as
\begin{align}
\E\norm{\err{\bmw}(i)}^2_\sigma\le
\E\norm{\err{\bmw}(i-1)}^2_{\calF\sigma} + 
(f_1+f_2)^\T\left(\sum_{j=0}^{i-1}\calF^{j}\right)\sigma+
\left(\sum_{j=0}^{i-1}f_3(i-j)^\T\calF^{j}\right)\sigma+
O(\mu_{\operatorname{max}}^2) {.} \label{eq:msd_bound_recursion_v4}
\end{align}
Choosing $\sigma=\frac{\Vc(I_{MN})}{N}$ and using arguments similar for \eqref{eq:msd_bound_recursion_v2}, as $i\to\infty$ the first term on the R.H.S of \eqref{eq:msd_bound_recursion_v4} converges to
\begin{align}
\frac{1}{N}[(f_1+f_2)\left(I_{M^2N^2}-\calF\right)^{-1} + g_\infty]\Vc(I_{MN}) {,}
\end{align}
if and only if $\calF$ is stable, i.e., $\rho(\calF)<1$, where $f_{3,\infty}$ is given in \eqref{eq:f3_infty}. Since
\begin{align}
\rho(\calF)=\rho(2\calB^\T\otimes\calB^\T)=2\rho(\calB)^2,
\end{align}
so that a sufficient condition to guarantee $\rho(\calF)<1$ is $\rho(\calB)<\frac{\sqrt{2}}{2}$. By the Lemma~D.5 in \cite{Sayed:BC14}, we have 
\begin{align}
\rho(\calB)\le \rho(I_{MN}-\calM \calR_{u})=\max_{1\le k\le N} \rho(I_{M}-\mu_k R_{u,k}).
\end{align}
Thus, to have $\rho(\calB)<\frac{\sqrt{2}}{2}$, we need 
\begin{align}
\max_{1\le k\le N} \rho(I_{M}-\mu_k R_{u,k})<\frac{\sqrt{2}}{2},
\end{align}
which is requiring each node $k$ to satisfy
\begin{align}
\max_{1\le m \le M}|1-\mu_k \lambda_m(R_{u,k})|<\frac{\sqrt{2}}{2},
\end{align}
and this is equivalent to require that
\begin{align}
|1-\mu_k \lambda_m(R_{u,k})|<\frac{\sqrt{2}}{2} \label{eq:lambdaRuk}
\end{align}
holds for each eigenvalue of $R_{u,k}$, i.e.,  $\lambda_m(R_{u,k})$. From \eqref{eq:lambdaRuk}, we obtain that $\mu_k$ needs to satisfy
\begin{align}
\frac{1-\frac{\sqrt{2}}{2}}{\lambda_{m}(R_{u,k})}<\mu_k < \frac{1+\frac{\sqrt{2}}{2}}{\lambda_{m} (R_{u,k})} \label{eq:mu_k_lambda_m}
\end{align}
for each of $\{\lambda_m(R_{u,k})|1\le m \le M\}$. In addition, suppose for each $R_{u,k}$ we have
\begin{align}
\lambda_\max(R_{u,k}) < \left(\frac{2+\sqrt{2}}{2-\sqrt{2}}\right) \lambda_\min(R_{u,k}),
\end{align}
then requiring $\mu_k$ to satisfy \eqref{eq:mu_k_lambda_m} for every $\lambda_m(R_{u,k})$ yields 
\begin{align}
\frac{1-\frac{\sqrt{2}}{2}}{\lambda_{\min}(R_{u,k})}<\mu_k < \frac{1+\frac{\sqrt{2}}{2}}{\lambda_{\max} (R_{u,k})}, \nonumber
\end{align}
which is the condition \eqref{eq:msd_bound_mu}.
\bibliographystyle{IEEEtran}

\end{document}